\documentclass[a4paper,11pt]{article}

\usepackage{amsfonts}
\usepackage{amsmath}
\usepackage{mathtools}
\usepackage{amsthm}
\usepackage{array}
\usepackage[nobreak]{cite}
\usepackage[usenames,dvipsnames]{color}
\usepackage{complexity}
\usepackage{environ}
\usepackage[margin=1in]{geometry}
\usepackage{graphicx}
\usepackage[utf8]{inputenc}
\usepackage{longtable}
\usepackage{subcaption}
\usepackage{tablefootnote}
\usepackage{url}
\usepackage{verbatim}
\usepackage[normalem]{ulem}
\usepackage{xspace}
\usepackage{listings}
\definecolor{gray}{rgb}{0.5,0.5,0.5}
\makeatletter
\def\lst@visiblespace{\hspace{0.70em}}
\makeatother

\usepackage{hyperref}

\let\oldenumerate\enumerate
\renewcommand{\enumerate}{
  \oldenumerate
  \setlength{\itemsep}{1pt}
  \setlength{\parskip}{0pt}
  \setlength{\parsep}{0pt}
}
\let\olditemize\itemize
\renewcommand{\itemize}{
  \olditemize
  \setlength{\itemsep}{1pt}
  \setlength{\parskip}{0pt}
  \setlength{\parsep}{0pt}
}

\newcommand{\find}[0]{\texttt{find}\xspace}
\newcommand{\mkdir}[0]{\texttt{mkdir}\xspace}

\newcommand{\calL}[0]{\mathcal{L}}
\newcommand{\code}[1]{\texttt{#1}}
\newcommand{\braces}[0]{\texttt{\symbol{123}\symbol{125}}\xspace}
\newcommand{\concat}{\,}
\newcommand{\bs}[0]{\symbol{92}}
\newcommand{\nul}[0]{{\normalfont\textsc{nul}}\xspace}
\newcommand{\inc}[0]{{\normalfont\textsc{inc}}\xspace}
\newcommand{\dec}[0]{{\normalfont\textsc{dec}}\xspace}
\newcommand{\jz}[0]{{\normalfont\textsc{jz}}\xspace}
\newcommand{\jump}[0]{{\normalfont\textsc{j}}\xspace}
\newcommand{\ispc}[0]{{\normalfont\textsc{ispc}}\xspace}
\newcommand{\sep}[0]{{\normalfont\textsc{sep}}\xspace}
\newcommand{\Sep}[0]{\overline{\sep}\xspace}
\newcommand{\shortgets}{\mathrel{\resizebox{0.75em}{!}{$\leftarrow$}}}
\newcommand{\pc}[0]{{\mathrm{pc}}\xspace}

\theoremstyle{definition}
\newtheorem{definition}{Definition}[section]

\newtheorem{theorem}[definition]{Theorem}

\title{Turing Completeness of GNU \texttt{find}: From \texttt{mkdir}-assisted Loops to Standalone Computation}
\author{Keigo Oka\thanks{Google. Work done in a personal capacity.}\\
\href{mailto:ogiekako@gmail.com}{ogiekako@gmail.com}}
\date{}

\begin{document}

\maketitle

\begin{abstract}
  The Unix command \texttt{find} is among the first commands taught to beginners, yet remains indispensable for experienced engineers.
  In this paper, we demonstrate that \texttt{find} possesses unexpected computational power, establishing three Turing completeness results using the GNU implementation (a standard in Linux distributions).
  (1) \texttt{find} + \texttt{mkdir} (a system that has only \texttt{find} and \texttt{mkdir}) is Turing complete: by encoding computational states as directory paths and using regex back-references to copy substrings, we simulate 2-tag systems.
  (2) GNU \texttt{find} 4.9.0+ alone is Turing complete: by reading and writing to files during traversal, we simulate a two-counter machine without \texttt{mkdir}.
  (3) \texttt{find} + \texttt{mkdir} without regex back-references is still Turing complete: by a trick of encoding regex patterns directly into directory names, we achieve the same power.

  These results place \texttt{find} among the ``surprisingly Turing-complete'' systems, highlighting the hidden complexity within seemingly simple standard utilities.
\end{abstract}

\section{Introduction}

The Unix command \find is among the first utilities introduced to students and system administrators.
While \find is designed to search for files in a directory hierarchy, its recursive traversal with the ability to execute commands opens the door to complex behaviors.
In this paper, we demonstrate that \find possesses unexpected computational power.\footnote{A preliminary version of this work appeared on the author's blog~\cite{ogiekako2024find} and attracted considerable interest~\cite{hackaday2024find}.}

Foundational work by Minsky~\cite{minsky1967computation} and Post~\cite{post1943formal} established that very simple systems, such as tag systems and counter machines, are capable of universal computation.
This insight explains why ``accidental'' Turing completeness is a recurring theme in computer science: systems designed for specific, limited purposes often turn out to have enough power to perform arbitrary computations.
Famous examples include the stream editor \code{sed}~\cite{sed-tc, blaess-turing-sed}, the x86 \code{mov} instruction~\cite{dolan2013mov}, and even the rules of \emph{Magic: The Gathering}~\cite{churchill2019magic}.
In security research, such emergent computational power has been termed ``weird machines''~\cite{bratus2013weird}.

\subsection{Contributions}

In this paper, we establish three Turing completeness results. Our proofs use the standard many-one reduction notion:
for each instance of a universal model,
we effectively construct an execution whose behavior simulates that instance.

\begin{enumerate}
  \item \textbf{GNU \find + \mkdir is Turing complete} (Section~\ref{sec:find_mkdir_tc}): By encoding computational states as directory paths and using regex back-references to copy substrings, we simulate 2-tag systems. We use options added in GNU \find 4.2.12, which was released in 2005.

  \item \textbf{GNU \find alone is Turing complete} (Section~\ref{sec:gnu_find_tc}): By reading and writing to files during traversal, we simulate a two-counter machine without \mkdir. This construction utilizes the \code{-files0-from} option added in GNU \find 4.9.0, which was released in 2022.

  \item \textbf{GNU \find + \mkdir without regex back-references is Turing complete} (Section~\ref{sec:no_backref_tc_sketch} (sketch), Appendix~\ref{sec:no_backref_tc}): By a technique of encoding regex patterns directly into directory names, we achieve the same power even without regex back-references.
\end{enumerate}

Our constructions use $O(1)$-length path components (hence do not assume an increased \code{NAME\_MAX}). In practice, resource limits such as storage capacity, maximum path length,
and finite inodes serve to cap how many simulation steps can be realized on a given machine.

These results place \find among the ``surprisingly Turing-complete'' systems, highlighting the hidden complexity within seemingly simple standard utilities.

We implement our constructions and confirm them on toy examples (Appendix~\ref{sec:toy-examples}).

\subsection{Applicability and Standards}

We restrict our attention to GNU \find, because (1) it is standard in Linux distributions, and (2) our constructions rely on options that are GNU extensions.
While the options we use for \find + \mkdir are also found in BSD \find (macOS/FreeBSD), subtle runtime differences prevent our constructions from working as-is, though they may work with some modifications.

Table~\ref{tab:implementations} summarizes the Turing completeness results established in this paper.

\begin{table}[h]
  \centering
  \small
  \begin{tabular}{|l|l|c|c|l|}
    \hline
    \textbf{\find Implementation} & \textbf{Key features} & \textbf{Model} & \textbf{Turing complete} & \textbf{Notes} \\ \hline
    GNU ($\ge 4.9.0$) & \code{-files0-from} & \find only & \textbf{Yes} & Sec.~\ref{sec:gnu_find_tc} \\ \hline
    GNU ($\ge 4.2.12$)  & \code{-regex} \code{-execdir} & \find + \mkdir & \textbf{Yes} & Sec.~\ref{sec:find_mkdir_tc} \\ \hline
    GNU ($\ge 4.2.12$) & \code{-regex} \code{-execdir} & \find + \mkdir & \textbf{Yes}\textsuperscript{\dag} & Sec.~\ref{sec:no_backref_tc_sketch} \\ \hline
    POSIX specification & - & \find + \mkdir & \textbf{Unknown} & No \code{-regex} \\ \hline
    \multicolumn{5}{l}{\footnotesize \textsuperscript{\dag} Proven even without using regex back-references.}
  \end{tabular}
  \caption{Turing completeness results for \find.}
  \label{tab:implementations}
\end{table}

The question of whether similar computational universality exists in other implementations, such as BSD, BusyBox, and uutils \find, remains an intriguing avenue for future research.

\section{Preliminaries}
\label{sec:preliminaries}

To establish that computation arises solely from \find and \mkdir, we define the following execution model.

\begin{definition}
  \label{def:execution-model}
  Let $\mathcal{B} = \{\find, \mkdir\}$ (for the \find + \mkdir system) or $\mathcal{B} = \{\find\}$ (for the \find system) be the set of permissible binaries.
  Fix a fresh, writable working directory ``.'' that is initially empty.

  An \emph{execution} is a finite sequence of command invocations $C_0, \cdots, C_{m-1}$,
  where each $C_i$ is an argv vector specified in advance.
  (The sequence is non-adaptive in the sense that $C_0, \cdots, C_{m-1}$ are fixed before the execution starts.)
  The execution is valid if and only if every spawned process has its executable image in $\mathcal{B}$.
  The \emph{result} of the execution is the concatenation of the standard output of the commands in execution order.
  We say the execution \emph{halts} if all commands in the sequence terminate.
\end{definition}

This condition can be intuitively understood through a security lens: consider a restricted, shell-free system (e.g., a minimal container) that allows an adversary to execute commands starting from $\mathcal{B}$ a constant number of times. What we prove is that even if $\mathcal{B}$ are the only executables in the system, the adversary can still force the system to perform arbitrary computations.

We clarify the resource model used throughout this paper.

\begin{definition}[Resource model]\label{def:resource-model}
  We work in an idealized file-system model with unbounded capacity: there is no fixed upper bound on the number of directory entries that can be created, no bound on the total path length, and no bound on the amount of data stored in file contents.
  We do not, however, assume unbounded length of a single path component (i.e., a file basename): our constructions use only $O(1)$-length path components and therefore do not rely on increasing \code{NAME\_MAX}, which we treat as a fixed finite constant (e.g., 255 on many systems).
\end{definition}

Before presenting our constructions, we review the key concepts of GNU \find~\cite{gnu-findutils} and \mkdir.

The \find utility recursively traverses a directory tree, evaluating an \emph{expression} for each file visited. This expression functions as a domain-specific language for file filtering and manipulation.
The expression is composed of:

\begin{itemize}
  \item \textbf{Tests}: Predicates that return true or false based on file attributes (e.g., \code{-name}, \code{-empty}, \code{-regex}).
  \item \textbf{Actions}: Operations that perform side effects (e.g., \code{-exec}, \code{-delete}) or control traversal (e.g., \code{-prune}, \code{-quit}).
  \item \textbf{Operators}: Logical connectives \code{-and} (implicit), \code{-or} (\code{-o}), and \code{-not} (\code{!}), where and/or evaluations are short-circuiting (right-hand side is evaluated only if needed).
\end{itemize}

Of particular importance are the actions \code{-exec} and \code{-execdir}.
They accept a command template terminated by \code{;}, where the string \braces is replaced by the current file path. The specified (either \find or \mkdir) command is executed directly (via fork and exec) without invoking a shell, adhering to our execution model. The action is evaluated to true only if the exit status of the command is 0. \code{-exec} runs the command in the same directory as the outer \find, and \code{-execdir} runs the command in the directory holding the current file, with \braces replaced with the filepath relative from the directory. This locality of \code{-execdir} is vital for our scale-independent loop construction (Section~\ref{sec:find_mkdir_impl}).

We note that nesting of \code{-exec} (e.g., \code{-exec find ... -exec ... ; ;}) is syntactically invalid, because the outer \find consumes the first \code{;} as its own terminator, leaving the inner \code{-exec} without a terminator. This restriction significantly constrains our design.

The directory traversal order of \find is pre-order by default, meaning a directory is visited before its children. This can be changed with a global option \code{-depth}, which changes the traversal order to post-order.
Table~\ref{tab:find_options} summarizes the \find options used in this paper.

\code{mkdir \textit{path}} creates a directory at the path. If its parent directory does not exist, creation fails and the command exits with a non-zero status. \code{mkdir -p \textit{path}} creates the directory and its parent directories if they do not exist, and exits with a zero status. These functionalities are specified in POSIX and consistent across platforms.
Our construction never relies on creation failure due to permission issues.

\begin{table}[ht]
  \centering
  \begin{tabular}{lp{0.58\textwidth}l}
    \hline
    \textbf{Option/Operator} & \textbf{Description} & \textbf{Added in}\\
    \hline
    \code{-true} & True. & - \\
    \code{-empty} & True if the file/directory is empty. & - \\
    \code{-name \textit{pattern}} & True if the file's basename matches the pattern. & - \\
    \code{-size \textit{n}c} & True if the file size is \textit{n} bytes. & - \\
    \code{-regex \textit{pattern}} & True if the file path matches the pattern.  & - \\
    \code{-exec \textit{cmd} ;} & Executes \textit{cmd} with \braces replaced by the path. True if the exit status is 0. & - \\
    \code{-execdir \textit{cmd} ;} & Executes \textit{cmd} in the file's directory with \braces replaced by the relative path. True if the exit status is 0. & 4.2.12 \\
    \code{-prune} & True. Do not descend into the matched directory. & - \\
    \code{-quit} & Terminates the entire \find execution. & - \\
    \code{-delete} & Removes the matched file or (if empty) directory. Implies \code{-depth}. True if removal succeeds. & 4.2.3 \\
    \code{-printf \textit{fmt}} & True. Print formatted output to standard output. & - \\
    \code{-fprintf \textit{file} \textit{fmt}} & True. Write formatted output to \textit{file}. & - \\
    \code{-depth} & Global option to do post-order traversal. & - \\
    \code{-files0-from \textit{file}} & Global option to read starting points from \textit{file} (NUL-separated) instead of the command line. & 4.9.0 \\
    \code{( \textit{expr} )} & True if \textit{expr} is true. Force precedence. & - \\
    \code{! \textit{expr}} & Not (true if \textit{expr} is false). & - \\
    \code{\textit{expr1} \textit{expr2}} & And. \textit{expr2} is not evaluated if \textit{expr1} is false. & - \\
    \code{\textit{expr1} -o \textit{expr2}} & Or. \textit{expr2} is not evaluated if \textit{expr1} is true. & - \\
    \hline
  \end{tabular}
  \caption{Summary of GNU \find options used in this paper. Operators below \code{( \textit{expr} )} are listed in order of decreasing precedence.}
  \label{tab:find_options}
\end{table}

\section{Turing Completeness of \find + \mkdir}
\label{sec:find_mkdir_tc}

In this section, we show the following theorem.

\begin{theorem}
  \label{thm:find_mkdir_tc}
  GNU \find + \mkdir is Turing complete for \find version $\ge$ 4.2.12 (as of this writing when 4.10.0 is the latest), assuming the resource model in Definition~\ref{def:resource-model}.
\end{theorem}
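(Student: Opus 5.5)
We prove Theorem~\ref{thm:find_mkdir_tc} by reduction from 2-tag systems, which are Turing complete by Minsky~\cite{minsky1967computation}. Fix a 2-tag system with alphabet $\Sigma$, halting symbol $H$, and productions $a \mapsto P(a)$. From it we effectively build one fixed \find invocation (so $m=1$ in Definition~\ref{def:execution-model}). Its spawned processes are only \find and \mkdir, and it diverges exactly when the tag system runs forever.

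\textbf{Encoding.} A tag word $w = a_1 a_2 \cdots a_n$ is represented as a directory path in which each symbol is one path component. For example, symbol $a_i$ becomes a component \code{x}$a_i$ of constant length, so the resource model of Definition~\ref{def:resource-model} is respected. Successive configurations are appended below one another as ever-deeper subdirectories. A separator component marks where the current configuration begins, so the whole computation history is a single growing chain of directories.

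\textbf{One step.} One tag step deletes $a_1 a_2$ and appends $P(a_1)$. Deletion is only a change of reference point: the new configuration starts two components later. For appending, we note that \code{-regex} matches the \emph{full} path, and GNU regex supports back-references. So for each symbol $a$ we write a test \code{-regex} of the form \code{.*/S/x}$a$\code{/x[^/]*/\(.*\)} that recognizes a leaf encoding a configuration whose first symbol is $a$ and captures the remaining suffix. The matching action calls \mkdir via \code{-execdir}, with a path built from the leaf. The key restriction is that \find cannot substitute a regex capture into the \mkdir argument; only \braces is substituted. We plan to work around this in two ways:
\begin{itemize}
  \item Place the new symbols so that $P(a)$ is appended below the current leaf, i.e.\ \code{mkdir -p \{\}/S/}$\ldots$.
  \item Make the configuration boundary readable by the back-reference pattern, so that later matches can check that the path after the last \code{S} equals a suffix of the previous configuration followed by $P(a)$.
\end{itemize}
Equivalently, each step both copies the surviving suffix and appends $P(a)$. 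We do the copying one symbol per traversal visit: a pattern \code{\(.*\)/S/\1x}$b$-style detects that the copy is incomplete and that the next symbol to copy is $b$, and then issues \code{mkdir} of one more component. Halting is a clause that matches when the first symbol is $H$ and runs \code{-quit}, printing nothing or a fixed marker.

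\textbf{Looping.} The traversal must reach the newly created directories. GNU \find in pre-order reads a directory's entries when it descends into it. Hence a subdirectory created by \code{-execdir mkdir} at the current node is visited afterward, and the single traversal naturally follows the growing chain. This is why \code{-execdir} gives a scale-independent loop: each action acts locally on the leaf. We would argue from the fts-based traversal in findutils that creating a child of the node currently being visited, before its children are read, causes that child to be visited. We would also confirm this on the toy examples in Appendix~\ref{sec:toy-examples}.

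\textbf{Correctness and main obstacle.} We prove by induction on the number of visits that the deepest path always encodes the history up to the current tag configuration, plus a partial copy. Exactly one regex clause fires per visit, because the clauses have mutually exclusive anchored patterns. It follows that the tag system halts iff \find terminates. The main obstacle is the copying step: \mkdir receives only \braces, and nested \code{-exec} is forbidden. All information flow must therefore pass through path structure and back-reference matching, while keeping every component $O(1)$ long. First we would design the symbol-by-symbol copy using \code{\textbackslash 1} to compare the copied prefix with the source. If that proves fragile, we fall back to emitting all $|\Sigma|$ candidate children and pruning the wrong ones with a back-reference test.
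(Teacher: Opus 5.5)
Your overall architecture is the paper's. You simulate a 2-tag system and keep the whole history as one chain of directories split by separator components. You copy the surviving suffix one symbol per visit, checked by a back-reference, act only through \code{-execdir mkdir} on the leaf, and rely on pre-order traversal visiting newly created children. The gap is that the step you defer (``first we would design the symbol-by-symbol copy'') is essentially the whole content of the proof, and the patterns you do write would fail. Even read as a sketch, \code{\bs(.*\bs)/S/\bs1x}$b$ requires the entire path before some \code{/S/} to reappear verbatim after it, which is not the condition you want. In \code{.*/S/x}$a$\code{/x[\textasciicircum/]*/\bs(.*\bs)} the leading \code{.*} may stop at \emph{any} \code{S}. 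That clause therefore fires whenever some word of the history, or the partial copy, starts with $a$, and it fires at every stage of the copy. So your claim that the clauses are mutually exclusive is false for these patterns. Also, with \code{.*} inside the captured group, nothing prevents a match from straddling a separator.

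The missing idea is to make the symbol-sequence subpattern unable to contain the separator and to end every pattern at the leaf. This forces the leading \code{.*} to stop at the second-to-last separator. The paper sets $\lambda=\code{/[a-z][a-z]}$ and $\Lambda=\code{\bs(}\lambda\code{\bs)*}$, neither of which can match \code{\_}. Its clause $\alpha_k=\code{.*\_}\lambda\lambda\code{\bs(}\Lambda\code{\bs)}\sigma_k\Lambda\code{/\_\bs1}$ then matches iff the partial copy after the last \code{\_} equals $\phi(x_{i,3})\cdots\phi(x_{i,j+2})$ and $x_{i,j+3}=s_k$. The group captures a slice of the previous word, and the back-reference sits at the very end of the path. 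Exclusivity is not needed. The clause $\beta_k=\code{.*\_}\sigma_k\Lambda\code{/\_}\Lambda$ also matches mid-copy, but it comes after all $\alpha_k$ in the \code{-o} chain, so it fires only once the copy is complete. Its action \code{mkdir -p \braces}$\pi(s_k)$\code{/\_} puts the new separator \emph{after} $P(s_k)$. Your \code{\braces/S/}$\ldots$, if it carries $P(a)$, would put $P(a)$ at the front of the next word. Two more ingredients are missing. First, the \code{-empty} guard is needed because \code{mkdir -p} creates several directories that are all visited afterwards; at the intermediate ones $\beta_k$ would fire again and branch the chain. Second, you need an output stage, such as the paper's final \code{find \_ -depth ! -empty -name \_ -execdir find \_ ! -name \_ -printf /\%f ; -quit} that prints $\Phi(w_t)$. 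Printing ``nothing or a fixed marker'' gives only halting-equivalence, not the output of the computation that the paper's notion of simulation requires.
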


\subsection{Loop Construction}
\label{sec:find_mkdir_impl}

The foundation of our constructions is the ability to create an infinite loop using only \find and \mkdir. For this, a property of \find we rely on is that child directories created during traversal of a directory become visible and are visited in the same execution if the traversal is pre-order (default). When \find visits a directory $D$ and an action creates a new subdirectory $C$ inside $D$, $C$ is visible to the subsequent directory enumeration and will be visited in the same \find execution.
While POSIX leaves this behavior unspecified\footnote{``If a file is removed from or added to the directory hierarchy being searched it is unspecified whether or not find includes that file in its search.''~\cite{posix-find}}, this behavior is consistent across GNU \find versions\footnote{Verified on GNU find 4.10.0 (Debian) and 4.5.11 (CentOS 7).}.

The following code implements an infinite loop. In our code snippets, we do not escape characters like \code{;} (e.g., as \verb|\;|), which is typically required by a shell. This is to signify shell independence. Note that proper escaping is necessary to execute these snippets in a shell.

\begin{verbatim}
mkdir x
find x -execdir mkdir {}/x ;
\end{verbatim}

This code creates \code{x}, then \find visits \code{x} and executes \code{mkdir} command in the directory \code{.} (the parent of \code{x}) with the argument \code{./x/x}, because \braces is substituted with \code{./x}, that is the relative path of \code{x} from \code{.} . This creates \code{x/x}. Due to the aforementioned property, \find then visits \code{x/x}, and it then creates \code{x/x/x}, and so on indefinitely.

We use \code{-execdir}, which was added in GNU \find 4.2.12.
While strictly speaking our resource model (Definition~\ref{def:resource-model}) does not forbid arbitrary long arguments, using \code{-execdir} is practically crucial.
If we use \code{-exec}, \find passes the path of the current file to the command.
As the directory structure grows deeper, this path eventually exceeds \code{PATH\_MAX} (typically 4096 bytes), causing \code{mkdir} to fail and \find to halt. \code{-execdir} avoids this by executing the command in the subdirectory containing the current file, allowing us to pass an argument of constant length.

\subsection{Tag System Implementation}

We prove that \find + \mkdir is Turing complete by simulating a 2-tag system \cite{post1943formal}. The following is a definition of a 2-tag system equivalent to the one given by Rogozhin~\cite{rogozhin1996small}.

\begin{definition}
  \label{def:2-tag-system}
  A \emph{2-tag system} is defined by a pair $(\Sigma,P)$, where $\Sigma$ is a finite alphabet containing a \emph{halting symbol} $H\in\Sigma$, and $P$ is a (partial) map with domain $\Sigma\setminus\{H\}$ and codomain $\Sigma^*$.
  A \emph{word} is a finite string over $\Sigma$. The computation starts with an \emph{initial word} $w_1\in\Sigma^*$, and proceeds iteratively. In the $i$-th iteration, with $w_i = x_{i,1}\cdots x_{i,k}$, the computation halts if $k<2$ or $x_{i,1}=H$. Otherwise, the next word is $w_{i+1} = x_{i,3}\cdots x_{i,k}\,P(x_{i,1})$.
  If the computation halts at $w_t$, $w_t$ is called the \emph{halting word}, and is the \emph{output} of the computation.
\end{definition}

It is known that 2-tag systems are computationally universal.
It is known that a universal 2-tag system exists~\cite{minsky1961tag}, and moreover, combining a reduction from 2-symbol $m$-state Turing machines
to 2-tag systems with a universal machine in $\mathrm{TM}(18,2)$~\cite{neary2009four} yields a universal tag system in $\mathrm{TS}(576,2)$ (i.e., a universal 2-tag system with $576$ symbols)~\cite{demol2008turing}.

We prove Turing completeness of \find + \mkdir by simulating an arbitrary 2-tag system in $\mathrm{TS}(576,2)$. We will encode the input word to an ASCII string and decode the halting word from an ASCII string, but since this mapping is straightforward (clearly computable by a deterministic finite state transducer), we omit a formal proof for the validity of the encoding.
We use \code{typewriter} font to denote string literals, simple juxtaposition for string concatenation, and $\varepsilon$ for the empty string.

Let $\mu = 576$, and $\Sigma=\{s_1,\dots, s_\mu, s_{\mu+1} = H\}$. Let $A$ be a set of lowercase ASCII letters, and $\{s_1', \dots,s_\mu', s_{\mu+1}' \}\subset A^2$ be distinct strings with length 2 (possible because $577 < 26^2$).
We define $\sigma_k = \code{/}\concat s_k'$, where \code{/} is the ASCII letter for directory separator. We let $\Sigma'=\{\sigma_1,\dots,\sigma_\mu, \sigma_{\mu+1} = \eta\}$ and $\phi: \Sigma \to \Sigma'$ be a function that maps $s_i$ to $\sigma_i$. We also define $\Phi: \Sigma^* \to \Sigma'^*$ so that $\Phi(\varepsilon) = \varepsilon$ and $\Phi(s_1\cdots s_k) = \Phi(s_1\cdots s_{k-1}) \concat \phi(s_k)$ for $k \ge 1$. We define $\pi(s_i) = \Phi(P(s_i))$ for $i \le \mu$.

We first run the following command to embed the initial word. We use spaces to separate arguments, distinguishing them from juxtaposition (concatenation).
\begin{lstlisting}
mkdir -p _$\concat\Phi(w_1)\concat$/_
\end{lstlisting}
The idea for the following computation is to iteratively append the file path representing the next word to it using \code{/\_} as a separator. During the computation, the working directory should contain exactly one directory \code{\_}, each of whose descendants is a directory containing at most one directory. We call the path to the empty directory at a point of the computation the state of the system at that point.
After the computation halts, the state of the system should be $\code{\_}\concat\Phi(w_1)\concat\code{/\_}\concat\Phi(w_2)\concat\code{/\_}\cdots\code{/\_}\Phi(w_t)\concat\code{/\_}$, and during the computation the state should be a prefix of it. Let $\Theta_{i,j}$ be the following state. Our initial state is $\Theta_{1,0}$ and the last state will be $\Theta_{t,0}$.
$$\Theta_{i,j} = \code{\_} \cdots \code{\_}\concat\Phi(w_i)\concat\code{/\_}\concat\phi(x_{i+1,1})\cdots\phi(x_{i+1,j})$$
At $\Theta_{i, j}$, if $j=0$ and $w_i$ is a halting word, we stop the computation. Otherwise, (1) if $\phi(x_{i+1,1})\cdots\phi(x_{i+1,j})$ equals $\phi(x_{i,3})\cdots\phi(x_{i,j+2})$ and $x_{i,j+3}$ exists, we append $\phi(x_{i,j+3})$ to the state, and (2) otherwise (if we have copied the last alphabet of the previous word), we append $\pi(x_{i,1})\code{/\_}$ to the state.

Let $\lambda = \code{/[a-z][a-z]}$ be a regex that matches any string in $\Sigma'$. \code{[a-z]} matches any lowercase ASCII letter, but not \code{/} and \code{\_}. Let $\Lambda=\code{\bs(}\concat \lambda \concat \code{\bs)*}$ be a regex that matches any repetition of strings in $\Sigma'$. \code{\bs(} and \code{\bs)} are verbatim because a backslash before parentheses is required by the \find's default regex type (emacs). Same for \code{\bs|} below. Define the following strings.
\begin{align*}
  \alpha_k &= \code{.*\_} \concat \lambda \concat \lambda \concat \code{\bs(} \concat \Lambda \concat \code{\bs)} \concat \sigma_k \concat \Lambda \concat \code{/\_\bs1} \\
  \beta_k  &= \code{.*\_} \concat \sigma_k \concat \Lambda \concat \code{/\_} \concat \Lambda \\
  \gamma &= \code{.*\_\bs(\bs|} \concat \lambda \concat \code{\bs|} \concat \eta \concat \Lambda \concat \code{\bs)/\_} \\
\end{align*}
We run the following as the second command. Newlines are for readability and have the same meaning as spaces.
\begin{lstlisting}
find _ -empty (
  -regex $\gamma$ -quit -o
  -regex $\alpha_1$ -execdir mkdir {}$\concat \sigma_1$ ; -o
  $\cdots$
  -regex $\alpha_{\mu+1}$ -execdir mkdir {}$\concat \sigma_{\mu+1}$ ; -o
  -regex $\beta_1$ -execdir mkdir -p {}$\concat\pi(s_1)\concat$/_ ; -o
  $\cdots$
  -regex $\beta_{\mu}$ -execdir mkdir -p {}$\concat\pi(s_{\mu})\concat$/_ ; -o
  -printf unreachable
)
\end{lstlisting}
Finally we run the following command to output the encoded result.
\begin{lstlisting}
find _ -depth ! -empty -name _ -execdir find _ ! -name _ -printf /%f ; -quit
\end{lstlisting}
We show that the output of the last command is $\Phi(w_t)$ if the original 2-tag system halts with $w_t$, and the second command does not halt if the original does not halt, proving Theorem~\ref{thm:find_mkdir_tc}.

\begin{proof}
  Let $\calL(r)$ be the language denoted by the regex $r$. Remember that $\Sigma' \subset \calL(\lambda), \Sigma'^* \subset \calL(\Lambda)$ and any string including \code{\_} or ending with \code{/} is not in $\calL(\lambda)$ nor $\calL(\Lambda)$. $\{\sigma_k\} = \calL(\sigma_k)$ and $\{\eta\} = \calL(\eta)$. We have $\sigma_k=\phi(s_k)$ and $\Theta_{i,j} = \cdots\code{\_}\concat\phi(x_{i,1})\cdots\phi(x_{i,|w_i|})\code{/\_}\concat\phi(x_{i+1,1})\cdots\phi(x_{i+1,j})$, where $x_{i+1,j+2} = x_{i,j}$ for $1 \le j\le |w_i|-2$.

  We prove that the state is $\Theta_{t,0}$ if the second command halts by induction. Let us call the expression in the parentheses the \emph{main expression}.
  As the base case, \find visits $\Theta_{1,0}$ and evaluates the main expression for the first time because of the \code{-empty} predicate. Assume the current state is $\Theta_{i,j}$ and \find visits it, evaluating the main expression. By construction, $\gamma$ matches if and only if $j=0$ and $|w_i| < 2$ or $x_{i,1} = H$, meaning $w_i$ is the halting word. If $j < |w_i| - 2$ and $x_{i,j+3} = s_k$, $\Theta_{i,j}\in \calL(\alpha_k)$ because $\sigma_k$ matches with $\phi(s_k)$, and $\code{\bs(}\concat\Lambda\concat\code{\bs)}$ and $\code{\bs 1}$ match with $\phi(x_{i,3})\cdots\phi(x_{i,j+2})$ and $\phi(x_{i+1,1})\cdots\phi(x_{i+1,j})$ which are the same string. The converse can be said because the $\code{\_}$ before $\code{\bs 1}$ combined with the fact the same string should match $\Lambda$ constrains it to match $\phi(x_{i+1,1})\cdots\phi(x_{i+1,j})$ and $\sigma_k$ only matches with $\phi(s_k)$. Provided $j = |w_i| - 2$, $\Theta_{i,j}\in \calL(\beta_k)$ if and only if $\phi(x_{i,1}) = \sigma_k$ by construction. These enforce the main expression to work as follows: (1) If $(i,j) = (t,0)$, it halts the command. (2) Otherwise if $j < |w_i| - 2$, it executes $\code{mkdir \braces}\concat\phi(x_{i,j+3})$ turning the state into $\Phi_{i,j+1}$. (3) Otherwise at $j = |w_i|-2$, it executes $\code{mkdir -p \braces}\concat\pi(x_{i,1})\code{/\_}$ turning the state into $\Phi_{i+1,0}$. The newly created state is subsequently visited due to the said GNU \find property, so the induction completes.

  The third command visits the state in post-order (\code{-depth}), matches the first non-empty \code{\_}, which is the second-to-last \code{\_} directory (say \emph{base}), executes the inner \find for the base directory, which visits its descendants in pre-order, printing the name of each directory prefixing \code{/} (due to \code{-printf /\%f}) unless the name is \code{\_} (due to \code{! -name \_}), effectively outputting $\Phi(w_t)$, then quit.
\end{proof}

\section{Turing Completeness of GNU \find (Standalone)}
\label{sec:gnu_find_tc}

We prove that GNU \find version 4.9.0 or later is Turing complete \emph{without} \mkdir.
We prove this by simulating Minsky's program machine \cite{minsky1967computation}.

\begin{theorem}
  \label{thm:gnu_find_tc}
  GNU \find is Turing complete for \find version $\ge$ 4.9.0 (as of this writing when 4.10.0 is the latest), assuming the resource model in Definition~\ref{def:resource-model}.
\end{theorem}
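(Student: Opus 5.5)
The plan is to simulate Minsky's two-counter program machine~\cite{minsky1967computation}. A program is a finite list of instructions of the forms $\inc(r)$, $\dec(r)$ / $\jz(r,\ell)$ and halt, acting on two counters $c_1, c_2$ and a program counter $\pc$. Following Definition~\ref{def:execution-model}, the execution will consist of one setup \find, one long-running main \find, and one output \find; no \mkdir is used. The key new ingredient replacing \mkdir-driven growth (Section~\ref{sec:find_mkdir_impl}) is \code{-files0-from}. I will run the main command as \code{find -files0-from F \ldots}, where \code{F} is the same file that the expression writes to via \code{-fprintf F \ldots}. Since \code{-fprintf} opens (and truncates) \code{F} once at parse time and then appends through a single stream, every record written during traversal is later consumed as a new starting point. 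This gives a self-feeding loop, and the setup command only has to seed \code{F} with the NUL-terminated encoding of the initial configuration (e.g.\ \code{find . -maxdepth 0 -fprintf F \ldots}).

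Configurations will be encoded purely as path \emph{strings} that resolve to an existing object without creating directories. The only directory guaranteed to exist is ``.'', so I will use the redundant components \code{.} and the separator structure of the path. Concretely, the encoding is $\code{./}\,p_\ell\,\code{/}\,(\code{./})^{c_1}\,q\,(\code{./})^{c_2}$, where $p_\ell$ and $q$ are distinguishable markers that still resolve, for instance by using differing numbers of consecutive slashes or a fixed small set of pre-created regular files. These markers are made by \code{-fprintf} in the setup command, so that the path is a valid starting point of bounded component length, as Definition~\ref{def:resource-model} requires. Dispatch on $\pc$ and zero-tests are then done with \code{-regex}, in the same way as $\alpha_k,\beta_k,\gamma$ in the proof of Theorem~\ref{thm:find_mkdir_tc}. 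Each instruction becomes one disjunct \code{-regex $\rho_\ell$ -fprintf F $\mathit{fmt}_\ell$}, whose format builds the successor path from \code{\%p} plus constant text, and halting becomes \code{-quit}. Adding \code{-maxdepth 0} keeps every starting point from being traversed further.

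The main obstacle is \emph{decrement} and \emph{rewriting the middle} of the path. \code{-printf} cannot emit an arbitrary regex-captured substring; only \code{\%p}, \code{\%h} (dirname) and \code{\%f} (basename) are available. \code{\%h} does let me delete the last component, so the counter at the end of the path can be decremented. My first attempt is therefore to make the \emph{last} segment always hold the counter being acted on. I will then implement ``swap which counter is last'' as a rotation subroutine of several intermediate states. This subroutine repeatedly peels a \code{./} off the end with \code{\%h} and re-adds it on the other side of $q$, using two auxiliary streams (files \code{F}, \code{G} read alternately, or a tag marker in the path) so that the old tail is transferred one unit per step. Each transfer step again appends to \code{F} and is visited later. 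If pure string rotation turns out awkward, the fallback is to keep each counter as the \emph{size} of its own file, increasing it by appending with \code{-fprintf} and testing zero and equality with \code{-size}/\code{-empty}. In that fallback, decrement is implemented by copying $c-1$ units into a freshly truncated file during a counting loop driven by \code{F}.

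Finally, I will prove correctness by induction on the number of simulated steps, mirroring the previous proof. The induction invariant is that after the $n$-th processed starting point, the unread tail of \code{F} is exactly the encoding of the $n$-th machine configuration (or of an intermediate state of a rotation subroutine). The regexes are mutually exclusive and exhaustive on encodings. \code{-quit} fires exactly at the halting instruction, and otherwise exactly one new record is appended, so the main command halts iff the counter machine halts. The output command then prints $c_1$ (e.g.\ via a \code{-regex}/\code{-printf} on the final record) in a fixed unary encoding, which completes the reduction.
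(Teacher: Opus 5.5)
There is a genuine gap, and it is in the loop mechanism itself. Your main command has the form \code{find -files0-from F} $\ldots$ \code{-fprintf F} $\ldots$. But \code{-fprintf F} opens \code{F} for writing, and so truncates it to 0 bytes, while the command line is being parsed. That happens before \find reads any starting point from \code{F}. The seed left by your setup command is therefore erased, and the main \find sees no starting points and exits at once.

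Even if the seed survived, the loop would still stop. The records written by the top-level \code{-fprintf F} are held in a stdio buffer of that same process and are in general not flushed to \code{F} before its own reader asks for the next record. The reader therefore hits EOF right after the records that were already on disk.

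The missing idea is that the stream file must be extended only by \emph{child} processes (\code{-exec find} $\ldots$ \code{;}). These terminate, and thereby flush, before the outer \find reads on. Because such a child truncates whatever file it writes, it has to rewrite the whole stream plus one record. This is why the paper makes the stream a homogeneous sequence of $s$ copies of \code{.}\nul. It implements $\inc(\code{s})$ by copying through an auxiliary file \code{t}, using the \code{first} marker to produce exactly one extra record. A stream of distinct configuration paths, as in your design, would need a much more delicate copy-and-extend step.

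The state encoding is also not workable as described. With \code{\%p}, \code{\%h}, \code{\%f} you can only push or pop components at the \emph{end} of a path, and markers realized as regular files can only be the last component. A single configuration path therefore acts as one pushdown store under finite control, which cannot simulate two counters. The rotation step that ``re-adds a unit on the other side of $q$'' is exactly the middle insertion you had just ruled out, and it is left unspecified.

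Your fallback of one file per counter points in the right direction. However, ``increasing it by appending with \code{-fprintf}'' runs into the same obstacles. Every \find invocation that writes a counter file starts by emptying it, and writes made by the long-running main process are not guaranteed to be visible to \code{-size}. An increment or decrement therefore needs a child loop that runs about $c$ times. The natural way to get one is to make the counter file itself a \code{-files0-from} stream of $c$ records \code{.}\nul, as the paper does for \code{a} and \code{b}. There, $\dec$ again uses a \code{first} marker to drop one record, and zero tests check whether the file exists. With these mechanisms in place, your induction over simulated steps goes through essentially as in the paper.
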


\subsection{Loop Construction}

The foundation of our constructions is the ability to create an infinite loop using only \find. We utilize \code{-files0-from \textit{file}}, which lets \find read the starting points from a file separated by \nul instead of the command line, where \nul is the ASCII character represented by a byte with value 0. \find's \code{-fprintf} can write \nul to a file provided \code{\bs 0} verbatim.
We update the file while the \find command is running letting infinite starting points be supplied to the command, causing a loop.

We first introduce a way of encoding a non-negative integer as a file. Let $\iota = \code{.}\concat\nul$ (string with length 2).
If the content of a file $f$ is the string $\iota$ repeated $n$ times, we consider the value of $f$ to be $n$. If $f$ does not exist, we consider the value to be 0. In other cases we consider the value of the file undefined. We will use files \code{a}, \code{b}, \code{s}, \code{t} and represent their values with $a, b, s, t$ respectively. We use $t$ for temporary storage. We construct our program so that throughout the computation $a,b,s$ are always defined.

Notice that the code like \code{find -files0-from s -prune \textit{expr}} evaluates \textit{expr} $s$ times, and it exits with non-zero exit code if $s=0$ (\code{s} does not exist).

For loop construction, we first initialize $s$ (for stream) with 1.
\begin{lstlisting}
find -fprintf s .\0 -quit
\end{lstlisting}

We define an expression which increments $s$ when evaluated as follows, and call this $\inc(\code{s})$. We similarly define $\inc(\code{a})$ and $\inc(\code{b})$ for later use, in which \code{s} is replaced with \code{a} or \code{b}.
\begin{lstlisting}
( ( -exec find s -quit ;
-exec find -quit -fprintf first . ;
-exec find -files0-from s ( -name . -o -name first -delete ) -fprintf t .\0 ;
-exec find -files0-from t -prune -fprintf s .\0 ;
-exec find t -delete ; ) -o
-exec find -fprintf s .\0 -quit ; )
\end{lstlisting}

To understand this, remember that the file given to \code{-fprintf} is truncated or created to 0 bytes as soon as the \find command starts, regardless  of whether its predicate ever matches (as documented). Line 1 is evaluated to true if and only if $s \neq 0$. Line 2 (where $s>0$) creates an empty file \code{first} (because of \code{-quit}, \code{-fprintf} is not evaluated). Line 3 updates \code{t} with \code{.}\concat\nul repeated $s+1$ times and removes \code{first} (first evaluation matches with both \code{.} and \code{./first}, removing \code{first} with \code{-delete}), effectively meaning $t \shortgets s+1$. Line 4 updates \code{s} with \code{.}\concat\nul repeated $t$ times ($s \shortgets t$), and line 5 deletes \code{t} ($t \shortgets 0$). If $s=0$, line 6 updates \code{s} with \code{.}\concat\nul ($s \shortgets 1$).

The loop is constructed as follows. Remember we start with $s=1$.
\begin{lstlisting}
find -files0-from s -prune $\inc(\code{s})$;
\end{lstlisting}

This works because GNU \find reads \code{s} as a stream. When the first \code{.}\concat\nul is read, \find still has not read the EOF (end of file). It then executes $\inc(\code{s})$ to extend the file so that it still has the next \code{.}\concat\nul, and this process continues indefinitely. The fact that this works depends entirely on the implementation of GNU \find, but this has been empirically confirmed on GNU \find 4.10.0 and rigorously confirmed to work by reading the source code of GNU \find 4.9.0. 

\subsection{Program Machines}

The following definition for 2-counter program machine is equivalent to the one given by Minsky~\cite{minsky1967computation}. 2-counter program machines are known to be Turing complete.

\begin{definition}
  \label{def:2-counter}
  A 2-counter program machine is a finite sequence of instructions \\
  $P=(I_0,\dots,I_{m-1})$ acting on counters $c_0,c_1\in\mathbb{N}$.
  A configuration is $(\pc,c_0,c_1)\in\{0,\dots,m\}\times\mathbb{N}^2$,
  where $\pc=m$ denotes the \emph{halted} configuration. The one-step transition relation $\to_P$ is defined for $\pc<m$ as follows
  (with $r\in\{0,1\}$ and $q\in\{0,\dots,m\}$):
  \begin{itemize}
    \item If $I_\pc=\mathrm{INC}(r)$ then $c_r\gets c_r+1$ and $\pc\gets \pc+1$.
    \item If $I_\pc=\mathrm{DEC}(r)$ then
      $c_r\gets \max\{c_r-1,0\}$ and $\pc\gets \pc+1$.
    \item If $I_\pc=\mathrm{JZ}(r,q)$ then
      $\pc\gets q$ if $c_r=0$, and $\pc\gets \pc+1$ otherwise.
    \item If $I_\pc=\mathrm{J}(q)$ then $\pc\gets q$.
  \end{itemize}

  An execution of $P$ from an initial configuration $(0,c_0,c_1)$ is the maximal $\to_P$ chain; it \emph{halts} if and only if it reaches $\pc=m$, and outputs $c_0$.
\end{definition}

To simulate 2-counter program with \find, we create expressions to simulate the instructions. We have already seen $\inc$ expressions. We define the following expression as $\dec(\code{a})$, and $\dec(\code{b})$ analogously.
\begin{lstlisting}
( ( -exec find a -size 2c -delete ;
-exec find a -quit ;
-exec find -quit -fprint first ;
-exec find -files0-from a -name first -delete -fprintf t skip -o -name t -fprintf t . -o -name . -fprintf t \0 ;
-exec find -files0-from t -name . -fprintf a .\0 ;
-exec find t -delete ;
) -o -true )
\end{lstlisting}

When the expression is evaluated, it sets $a$ to $\max(a-1,0)$ as follows: Line 1 is evaluated to true and deletes \code{a} ($a\shortgets 0$) if and only if $a=1$ (due to \code{-size 2c}). Line 2 evaluates to false if $a=0$. Line 3 (with $a\ge 2$) creates \code{first}. Line 4 sets $t$ to $a-1$, which happens as follows: (1) Due to \code{-fprintf t}, \code{t} exists as an empty file as soon as the \find starts. (2) In the first iteration of \code{.} from \code{a}, because of \code{-delete} implying \code{-depth}, \code{./first} (being deleted) and \code{./t} are iterated before \code{.} and it causes either \code{.skip\concat\nul} or \code{skip.\concat\nul} to be written to \code{t}. (3) In the following $a-1$ iterations, since \code{first} no longer exists, $a-1$ repeat of \code{.}\concat\nul is appended to \code{t}. In line 5, since neither file \code{.skip} nor \code{skip.} exists, it is skipped with an error message written to standard error (which we do not use), causing \code{-fprintf a .\bs 0} to be evaluated $a-1$ times ($a\shortgets a-1$). Line 6 deletes \code{t} ($t\shortgets 0$). Line 7 exists to make $\dec(\code{a})$ always true.

The program counter is represented by the file \code{pc}, but unlike $a$ and $b$, we directly encode the value $n$ as a string of $n$ ones (ASCII character \code{1}). For example, the program counter being 3 means \code{pc}'s content is \code{111}. \code{pc} will be initialized to an empty file with \code{find -quit -fprintf pc not-written}, and the program counter will always be defined hereafter. Let $\code{1}^n$ denote a string of $n$ ones.

We define the expression $\jump(q)$ for $q\in\{0,\dots,m\}$ as follows. If an empty string is a concern, we can make it \code{-exec find -quit -fprintf pc x ;} for $q=0$. It is clear that it is evaluated to true and sets $\pc$ to $q$.
\begin{lstlisting}
-exec find -fprintf pc $\code{1}^q$ -quit ;
\end{lstlisting}
We define the expression $\jz(\code{a}, q)$ for $q\in\{0,\dots,m\}$ as follows, and $\jz(\code{b}, q)$ analogously. It is clear that $\jz(\code{a}, q)$ is evaluated to true and sets $\pc$ to $q$ if and only if $a=0$.
\begin{lstlisting}
( ( ! -exec find a -quit ; $\jump(q)$ ) -o -true )
\end{lstlisting}

Finally, we define the expression $\ispc(q)$ for $q\in\{0,\dots,m-1\}$ as follows.
Unlike other expressions, it assumes the current file is \code{pc}, and is evaluated to true if and only if $\pc = q$. $q$ in the code is substituted to the corresponding decimal number represented in ASCII.
\begin{lstlisting}
-size $q$c
\end{lstlisting}

The commands to simulate a given 2-counter program $P=(I_0,\dots,I_{m-1})$ and initial configuration $(0, c_0, c_1)$ is as follows. We let $\gamma_0 = \code{a}$ and $\gamma_1 = \code{b}$.

Our first command to run is \code{find -quit -fprintf pc x}, which initializes $\pc$ to 0. The next command to run is the following, where $\inc(\code{a})$ is repeated $c_0$ times and $\inc(\code{b})$ is repeated $c_1$ times. This initializes $s$, $a$, and $b$ with 1, $c_0$, and $c_1$ respectively.

\begin{lstlisting}
find $\inc(\code{s})$ $\inc(\code{a})\code{ }\cdots\code{ }\inc(\code{a})$ $\inc(\code{b})\code{ }\cdots\code{ }\inc(\code{b})$ -quit
\end{lstlisting}

The next command is the main loop. Let $\phi$ be a mapping from $\{I_0,\dots,I_{m-1}\}$ to \find expressions defined by the following.
\begin{align*}
  \phi(\mathrm{INC}(r)) &= \inc(\gamma_r)\\
  \phi(\mathrm{DEC}(r)) &= \dec(\gamma_r)\\
  \phi(\mathrm{JZ}(r, q)) &= \jz(\gamma_r, q)\\
  \phi(\mathrm{J}(q)) &= \jump(q)
\end{align*}
The main loop is as follows.
\begin{lstlisting}
find -files0-from s -name pc $\inc(\code{s})$ (
  $\ispc(0)$ $\jump(1)$ $\phi(I_0)$ -o
  $\cdots$
  $\ispc(m-1)$ $\jump(m)$ $\phi(I_{m-1})$ -o
  -quit
)
\end{lstlisting}

Finally, the following two commands allow us to directly output the value of $a$. The first command writes $a$ \code{1}s to \code{count}. The second command outputs the byte count of \code{count}, which is $a$.

\begin{verbatim}
find -files0-from a -fprintf count 1 -prune
find count -printf %s
\end{verbatim}

We show that by running the above commands from the first to the last, we get the same output as the original 2-counter machine if the machine halts, and it does not halt if the machine does not halt, completing the proof of Theorem \ref{thm:gnu_find_tc}. It is enough to show that the main loop works as expected.

\begin{proof}[Proof for Theorem \ref{thm:gnu_find_tc}]
  We call the expression in the parenthesis the \emph{main expression}. Until \code{-quit} in the main loop is evaluated, the main loop runs indefinitely: because no expression deletes \code{pc}, $\code{-name pc }\inc(\code{s})$ evaluates to true once for each starting point from \code{s}, appending \code{.\nul} to \code{s}. The main expression follows and is evaluated with the current file \code{pc}.

  We prove the claim by induction on the number of steps $t$ of the program machine.
  Let $(\pc^{(t)}, c_0^{(t)}, c_1^{(t)})$ be the configuration of the program machine at step $t$. Assume that the values encoded by \code{pc}, \code{a}, \code{b} ($\pc, a, b$) are equal to $\pc^{(t)}, c_0^{(t)}, c_1^{(t)}$ just before the $(t+1)$-th evaluation of the main expression. The base case $t=0$ holds as previously shown.

  At the $(t+1)$-th evaluation, $\ispc(i)$ is true if and only if $i = \pc = \pc^{(t)}$. Therefore, \code{-quit} is evaluated if and only if $\pc = m$, in which case both the main loop and the program machine halt, leaving $a=c_0^{(t)}$. Otherwise, the program machine executes $I_{\pc}$, and the \find command evaluates $\jump(\pc+1)\code{ }\phi(I_{\pc})$. This sequence updates $(\pc, a, b)$ exactly as $I_{\pc}$ updates $(\pc^{(t)}, c_0^{(t)}, c_1^{(t)})$, so the new values equal $(\pc^{(t+1)}, c_0^{(t+1)}, c_1^{(t+1)})$. Since $\jump(\pc+1)\code{ }\phi(I_{\pc})$ always evaluates to true, the main expression completes (due to the subsequent \code{-o}).
  This completes the proof.
\end{proof}

\section{Turing Completeness of \find + \mkdir without Back-references (Sketch)}
\label{sec:no_backref_tc_sketch}

While Section~\ref{sec:find_mkdir_tc} relies on regex back-references for the ``copy'' operation, we can achieve Turing completeness \emph{without} back-references.
We present the core ideas for our proof and provide a proof sketch. The full proof is in Appendix~\ref{sec:no_backref_tc}.

To signify non-use of back-references, we will always use \code{-regextype awk} global option for any \find that uses \code{-regex \textit{pattern}}. The awk regex engine does not support back-references. (It also uses unescaped parentheses \code{(} and \code{)} for grouping, unlike the default emacs regex engine which requires \code{\bs(} and \code{\bs)}, though this is not an important advantage for us, and the similar construction should work for any other regex types.)

We build on the idea in Section~\ref{sec:find_mkdir_tc} to simulate 2-tag systems. That is, during the computation, there is only one empty directory under the current working directory, and the path to it represents the state of the computation. The encoded symbols are strings $\{\sigma_i\}$ and when the computation halts, the last state encodes the words $w_1,\dots, w_t$ from left to right separated by separator strings. During the computation, the state has the form of $\Theta_{i,j}$, which is a prefix of the last state and encodes the status where the word $w_i$ is given and the first $j$ letters of the word $w_{i+1}$ have been computed. $x_{i,j}$ denotes the $j$-th symbol of $w_i$, and we let $\phi$ be a function that maps $s_i$ to $\sigma_i$, where $\{s_i\}$ are the symbols of the original tag-system.

We define $\Sigma'=\{\sigma_1, \dots,\sigma_{\mu+1}\}$ as before. That is, $\Sigma' \subset \calL(\code{/[a-z][a-z]})$ and $\sigma_i \neq \sigma_j$ for $i \neq j$.
The main trick is that we use $\sep=\code{\$\_)?(}$ as the separator, instead of \code{\_}. Those are allowed characters for filenames. Note that the letter $\code{\$}$ in regex is special in that it only matches the end of the string,  and $\code{?}$ indicates one or zero occurrence of the previous atom. This means $\code{\$\_}$ is a pattern that never matches any string.
This choice makes the state $\Theta_{i,j}$ during the computation look like the following.
$$
\Theta_{i,j} = \code{\$\_)?(}\cdots \concat \code{/\$\_)?(} \concat \phi(x_{i,1})\concat  \cdots\concat  \phi(x_{i,|w_i|})\concat  \code{/\$\_)?(}\concat  \phi(x_{i+1,1}) \concat\cdots\concat \phi(x_{i+1,j})
$$
Let us consider the regex \code{(}\concat$\Theta_{i,j}$\concat\code{)}, which should look like this:
$$
\code{(\$\_)?(}\concat\cdots\concat\code{\$\_)?}\concat\cdots\concat\code{(}\concat\cdots\concat\code{\$\_)?(}\concat  \phi(x_{i+1,1}) \concat\cdots\concat \phi(x_{i+1,j})\concat\code{)}
$$
This (valid) regex matches only $\phi(x_{i+1,1}) \concat\cdots\concat \phi(x_{i+1,j})$. The regex obviously matches the string, and conversely, if it matches (because $\calL(\cdots\concat\code{\$\_}) = \emptyset$), every $\code{(}\cdots\code{)?}$ must be matching with an empty string.

This observation allows us to update the part where we used back-references in the previous proof. That is, at state $\Theta_{i,j}$ where $j < |w_i|-2$, we append $\phi(x_{i,j+3})$ to the state without back-references. The idea is (1) we create $\mu+1$ marker files $t_1,\dots, t_{\mu+1}$ under the directory that represents the current state, (2) we invoke $\mu+1$ inner \find's so that $k$-th one's regex matches with $t_k$'s path only when $\phi(x_{i,j+3})=\sigma_k$, deleting the marker file. To construct such a regex, we embed the state into the regex with \braces (detailed below). (3) we check marker files and create a corresponding directory if a file is gone, advancing the state.

The regex constructed in the step (2) for $k$ is the following, where $\lambda=\code{/[a-z][a-z]}$ and $\Lambda=\code{(}\concat\lambda\concat\code{)*}$.
$$
\code{.*}\concat\sep\concat\lambda\concat\lambda\concat\code{(\braces)}\concat\sigma_k\concat\Lambda\concat\sep\Lambda\concat \code{/}\concat t_k
$$

The state $\Theta_{i,j}$ has $\phi(x_{i+1,1}) \concat\cdots\concat \phi(x_{i+1,j})$ after the separator. The \code{(\braces)} is expanded to $\code{(}\concat\Theta_{i,j}\concat\code{)}$, and as we have seen it matches with and only with $\phi(x_{i+1,1}) \concat\cdots\concat \phi(x_{i+1,j})$. We know it is equal to $\phi(x_{i,3})\concat\cdots\concat\phi(x_{i,j+2})$. This ensures that the regex matches if and only if $\phi(x_{i,j+3}) = \sigma_k$, and the current file is the marker file $t_k$.

We can use the same construction as the previous proof on other parts of the commands, and this completes the proof sketch of the following theorem.

\begin{theorem}
  \label{thm:no_backref_tc}
  GNU \find + \mkdir is Turing complete without using regex back-references for \find version $\ge$ 4.2.12 (as of this writing when 4.10.0 is the latest), assuming the resource model in Definition~\ref{def:resource-model}.
\end{theorem}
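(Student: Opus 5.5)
We simulate an arbitrary 2-tag system in $\mathrm{TS}(576,2)$, reusing the architecture of the proof of Theorem~\ref{thm:find_mkdir_tc}. The encoding is as before: the state is the path to the unique empty directory. The differences are that the separator is now $\sep=\code{\$\_)?(}$ and every \find carries \code{-regextype awk}. The execution has three parts.
\begin{enumerate}
  \item An initial \code{mkdir -p} writes $\sep\concat\Phi(w_1)\concat\code{/}\sep$.
  \item A main pre-order \find runs, driven by \code{-empty} as in Section~\ref{sec:find_mkdir_impl}.
  \item A final decoding \find prints the last word, exactly like the third command of Section~\ref{sec:find_mkdir_tc} with \code{-name \_} replaced by a test for \sep.
\end{enumerate}
The proof is again an induction on the states $\Theta_{i,j}$. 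We show that each visit of the main expression to the empty directory at $\Theta_{i,j}$ does one of three things: it quits exactly when $j=0$ and $w_i$ is halting; it produces $\Theta_{i,j+1}$ when $j<|w_i|-2$; or it produces $\Theta_{i+1,0}$ when $j=|w_i|-2$. The newly created directory is then visited, by the GNU pre-order property.

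The halting test and the word-ending step use no back-references. In $\gamma$ and $\beta_k$ we replace \code{\_} by a regex matching the literal separator, e.g.\ \code{\bs\$\_\bs)\bs?\bs(}. Their correctness then follows verbatim from the earlier proof, because $\lambda,\Lambda$ never match strings containing separator characters. The copy step is the one sketched above, made concrete. Inside the state directory we first create marker directories $t_1,\dots,t_{\mu+1}$, with fresh short names, via \code{-execdir mkdir}. We then run $\mu+1$ inner \code{-execdir find} invocations. In the $k$-th one, the pattern argument has the form $\code{.*}\concat\sep\concat\lambda\lambda\code{(\braces)}\sigma_k\Lambda\sep\Lambda\code{/}t_k$, and \code{-delete} is applied to whatever matches. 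Since \code{-execdir} substitutes only a relative path, the bare \braces will not contain the whole state. The inner \find therefore has to be started from an ancestor. My first attempt is to issue these commands from the root-level \code{\_}-like directory with \code{-exec}, accepting long arguments; these are permitted by Definition~\ref{def:resource-model}. I would then check whether a relative-path variant can be obtained by running the inner \find on the state's top-level component.

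Finally, a third batch of alternatives tests which marker has disappeared, by the absence of $t_k$. It creates $\sigma_k$ and removes the remaining markers with \code{-delete}, or else guards against them by a name test. The markers must be removed, or excluded by the tests, because the state invariant requires exactly one empty directory on the path.

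The key lemma is the language-theoretic claim: the regex \code{(}$\Theta_{i,j}$\code{)}, read as an awk ERE, is well formed and has language exactly $\{\phi(x_{i+1,1})\cdots\phi(x_{i+1,j})\}$. I would prove it by induction on the number of separators. Each separator closes a group with \code{)?}, after the unsatisfiable atom \code{\$\_}, and opens a new one. Hence every optional group except the last must match the empty string, and the last group is a literal string, because the \code{/} and letter characters are not special. I also need the parentheses to balance: the leading \code{(} and trailing \code{)} supplied by the template account for the unmatched ends. Given this lemma, the $k$-th inner regex matches the path of $t_k$ iff $\phi(x_{i,j+3})=\sigma_k$, by the same reasoning as the converse direction for $\alpha_k$.

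I expect the main obstacle to be the engineering of the copy step. Three points need care: getting the full state into \braces while the nesting restriction forbids \code{-exec} inside \code{-exec}; keeping exactly one empty directory, since markers and the traversal order interact; and confirming that GNU awk regex treats \code{\$} mid-pattern as an anchor rather than a literal.
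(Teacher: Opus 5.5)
This is essentially the paper's proof. It uses the separator \sep under \code{-regextype awk} and rewrites $\gamma$ and $\beta_k$ with the escaped separator. The full state is spliced into the inner regex through \braces of an outer \code{-exec} whose inner \find starts at the top-level \sep; long arguments are allowed, and the nesting restriction does not bite, because the inner \find needs only \code{-regex} and \code{-delete}. Markers are deleted by the $\alpha_k$-matching inner \find and followed by an absence test, and the key lemma is the same: \code{(}$\Theta_{i,j}$\code{)} denotes exactly $\phi(x_{i+1,1})\cdots\phi(x_{i+1,j})$.

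The paper settles the marker bookkeeping you left open in two ways. It makes the markers regular files, so \code{-type d} in the main \find and \code{-type f} in the inner ones separate them from state directories. It also attaches \code{-exec find} \sep \code{-type f -delete ;} with the comma operator, so the cleanup runs in both the copy branch and the $\beta_k$ branch. With your directory markers, that cleanup (or a guard in every later test) must also cover the $\beta_k$ branch. Otherwise the markers left there are empty directories that the main \find would next visit as states.
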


\providecommand{\bysame}{\leavevmode\hbox to3em{\hrulefill}\thinspace}
\providecommand{\MR}{\relax\ifhmode\unskip\space\fi MR }
\providecommand{\MRhref}[2]{%
  \href{http://www.ams.org/mathscinet-getitem?mr=#1}{#2}
}
\providecommand{\href}[2]{#2}

\newpage

\appendix
\section{Turing Completeness of \find + \mkdir without Back-references}
\label{sec:no_backref_tc}

We expand on the sketch of the proof in Section~\ref{sec:no_backref_tc_sketch}.

We prove Theorem~\ref{thm:no_backref_tc} by simulating 2-tag systems (Definition~\ref{def:2-tag-system}) with \find + \mkdir without regex back-references. We use the notation in the definition in the following proof. Note that the construction resembles the one in Section~\ref{sec:find_mkdir_tc}.

Let $\mu = 576$, and $\Sigma=\{s_1,\dots, s_\mu, s_{\mu+1} = H\}$. Let $A$ be a set of lowercase ASCII letters, and $\{s_1', \dots,s_\mu', s_{\mu+1}' \}\subset A^2$ be distinct strings with length 2. We will use \code{-regextype awk} for \find{}s with \code{-regex} to ensure that regexes are matched without back-references, while similar constructions should work for other regex types.

We define $\sigma_k = \code{/}\concat s_k'$, where \code{/} is the ASCII letter for directory separator. We let $\Sigma'=\{\sigma_1,\dots,\sigma_\mu, \sigma_{\mu+1} = \eta\}$ and $\phi: \Sigma \to \Sigma'$ be a function that maps $s_i$ to $\sigma_i$. We also define $\Phi: \Sigma^* \to \Sigma'^*$ so that $\Phi(\varepsilon) = \varepsilon$ and $\Phi(s_1\cdots s_k) = \Phi(s_1\cdots s_{k-1}) \concat \phi(s_k)$ for $k \ge 1$. We define $\pi(s_i) = \Phi(P(s_i))$ for $i \le \mu$.

Let $\sep = \code{\$\_)?(}$ . We first run the following command to embed the initial word.
\begin{lstlisting}
mkdir -p $\sep\concat\Phi(w_1)\concat$/$\concat\sep$
\end{lstlisting}

The idea for the following computation is to iteratively append the file path representing the next word to it using \sep as a separator. During the computation, the current working directory (CWD) should contain exactly one directory \sep, and under \sep there is exactly one empty directory. We call the path to the empty directory at a point of the computation the state of the system at that point.
After the computation halts, the state of the system should be $\sep\concat\Phi(w_1)\concat\code{/}\concat\sep\concat\Phi(w_2)\concat\code{/}\concat\sep\concat\cdots\concat\code{/}\concat\sep\concat\Phi(w_t)\concat\code{/}\concat\sep$, and during the computation the state should be a prefix of it. Let $\Theta_{i,j}$ be the following state. Our initial state is $\Theta_{1,0}$ and the last state will be $\Theta_{t,0}$.
$$\Theta_{i,j} = \cdots \concat\sep\concat\Phi(w_i)\concat\code{/}\concat\sep\concat\phi(x_{i+1,1})\cdots\phi(x_{i+1,j})$$
At $\Theta_{i, j}$, if $j=0$ and $w_i$ is a word to halt, we stop the computation. Otherwise, (1) if $\phi(x_{i,j+3})$ exists, we append $\phi(x_{i,j+3})$ to the state, and (2) otherwise (if we have copied the last alphabet of the previous word), we append $\pi(x_{i,1})\concat\code{/}\concat\sep$ to the state.

Let $\lambda = \code{/[a-z][a-z]}$ be a regex that matches any string in $\Sigma'$. Let $\Lambda=\code{(}\concat \lambda \concat \code{)*}$ be a regex that matches any repetition of strings in $\Sigma'$.
We also define $\Sep = \code{\bs\$\_\bs)\bs?\bs(}$ as a string that satisfies $\calL(\Sep) = \{\sep\}$. Define $t_1 = \code{1}, t_2 = \code{2}, \dots$ until $t_{\mu+1}$, which is an ASCII string representing $\mu+1$ in decimal.
Define the following strings.
\begin{align*}
  \alpha_k &= \code{.*} \concat\Sep \concat \lambda \concat \lambda \concat \code{(\braces)} \concat \sigma_k \concat \Lambda \concat \code{/}\concat\Sep\concat\Lambda\concat\code{/}\concat t_k \\
  \beta_k  &= \code{.*}\concat\Sep\concat \sigma_k \concat \Lambda \concat \code{/}\concat\Sep \concat \Lambda \\
  \gamma &= \code{.*}\concat\Sep\concat\code{(|} \concat \lambda \concat \code{|} \concat \eta \concat \Lambda \concat \code{)/}\concat\Sep \\
\end{align*}
We run the following as the second command. Newlines are for readability and have the same meaning as spaces.

\noindent
\begin{minipage}{\linewidth}
\begin{lstlisting}
find $\sep$ -regextype awk -type d -empty (
  -regex $\gamma$ -quit -o (*@\label{line:tc2:halt} \hfill $\triangleright$ \textrm{Halt check}@*)
  -execdir find -fprint {}/$\concat t_1$ -quit ; (*@\label{line:tc2:save} \hfill $\triangleright$ \textrm{Save $t_k$}@*)
  $\cdots$
  -execdir find -fprint {}/$\concat t_{\mu+1}$ -quit ;
  -exec find $\sep$ -regextype awk -type f -regex $\alpha_1$ -delete -quit ; (*@\label{line:tc2:check} \hfill $\triangleright$ \textrm{Check $\alpha_k$}@*)
  $\cdots$
  -exec find $\sep$ -regextype awk -type f -regex $\alpha_{\mu+1}$ -delete -quit ;
  ( (*@\label{line:tc2:append_start}@*)
    ! -execdir find {}/$\concat t_1$ -quit ; -execdir mkdir {}/$\sigma_1$ ; -o (*@\hfill $\triangleright$ \textrm{Append next}@*)
    $\cdots$
    ! -execdir find {}/$\concat t_{\mu+1}$ -quit ; -execdir mkdir {}/$\sigma_{\mu+1}$ ; -o
    -false
  ) -o ( (*@\label{line:tc2:append_end}@*)
    -regex $\beta_1$ -execdir mkdir -p {}$\concat\pi(s_1)\concat$/$\concat\sep$ ; -o (*@\label{line:tc2:transition} \hfill $\triangleright$ \textrm{Transition}@*)
    $\cdots$
    -regex $\beta_{\mu}$ -execdir mkdir -p {}$\concat\pi(s_{\mu})\concat$/$\concat\sep$ ; -o
    -printf unreachable
  ) ,
  -exec find $\sep$ -type f -delete ;
)
\end{lstlisting}
\end{minipage}

It uses some constructions that are not mentioned in the preliminaries section: \code{-type d} and \code{-type f} are evaluated to true if and only if the current file is a directory and a regular file, respectively. \code{-fprint \textit{file}} prints the path to the current file to \textit{file} (while we only use it to create a file). \code{\textit{expr1} , \textit{expr2}} evaluates both \textit{expr1} and \textit{expr2} and its value becomes the value of \textit{expr2}. The \code{,} has the least precedence (less than \code{-o}).

Finally we run the following command to output the encoded result.

\begin{lstlisting}
find $\sep$ -depth ! -empty -name $\sep$ -execdir find $\sep$ ! -name $\sep$ -printf /%f ; -quit
\end{lstlisting}
We show that the output of the last command is $\Phi(w_t)$ if the original 2-tag system halts with $w_t$, and the second command does not halt if the original does not halt, proving Theorem~\ref{thm:no_backref_tc}.

\begin{proof}
  Remember that $\Sigma' \subset \calL(\lambda), \Sigma'^* \subset \calL(\Lambda)$ and any string including a character in \sep or ending with \code{/} is not in $\calL(\lambda)$ nor $\calL(\Lambda)$. $\{\sigma_k\} = \calL(\sigma_k)$ and $\{\eta\} = \calL(\eta)$. We have $\Theta_{i,j} = \cdots\concat\sep\concat\phi(x_{i,1})\concat\cdots\concat\phi(x_{i,|w_i|})\concat\code{/}\concat\sep\concat\phi(x_{i+1,1})\concat\cdots\concat\phi(x_{i+1,j})$, where $x_{i+1,j+2} = x_{i,j}$ for $1 \le j\le |w_i|-2$.

  We prove that the state is $\Theta_{t,0}$ if the second command halts by induction. Let us call the expression in the outermost parentheses the \emph{main expression}.
  The \code{-type d} predicate is evaluated to true if and only if the current file is a directory. This means the expression is evaluated only when the current file is an empty directory.
  As the base case, \find visits $\Theta_{1,0}$ and evaluates the main expression for the first time. Assume the current state is $\Theta_{i,j}$ and \find visits it, evaluating the main expression. We also assume that before the evaluation $\Theta_{i,j}$ is the only empty directory under CWD, and there is no regular file (that matches \code{-type f}) under CWD, calling it the \emph{cleanness condition}. We will prove that after the evaluation the cleanness condition still holds as well.
  By construction, $\gamma$ matches and halts computation if and only if $j=0$ and $|w_i| < 2$ or $x_{i,1} = H$, meaning $w_i$ is the halting word (Line~\ref{line:tc2:halt}). If it does not halt, it creates marker files $\Theta_{i,j}\concat\code{/}\concat t_k$ for $k = 1,\dots,\mu+1$ (Line~\ref{line:tc2:save}). Then it evaluates the expressions from line~\ref{line:tc2:check}. The $k$-th expression from line~\ref{line:tc2:check} executes \find for the \sep in the CWD. Due to the cleanness condition and \code{-type f}, if it matches, it should be one of $t_1, \dots, t_{\mu+1}$. Also because $\alpha_k$ ends with $\code{/}\concat t_k$, it can match with $t_k$ only. If the regex matches $t_k$, it means
  $$\Theta_{i,j}\concat\code{/}\concat t_k\in \calL(\code{.*} \concat\Sep \concat \lambda \concat \lambda \concat \code{(}\concat\Theta_{i,j}\concat\code{)} \concat \sigma_k \concat \Lambda \concat \code{/}\concat\Sep\concat\Lambda\concat\code{/}\concat t_k)$$
  meaning
  $$\Theta_{i,j} \in \calL(\code{.*} \concat\Sep \concat \lambda \concat \lambda \concat \code{(}\concat\Theta_{i,j}\concat\code{)} \concat \sigma_k \concat \Lambda \concat \code{/}\concat\Sep\concat\Lambda)$$
  As explained in Section~\ref{sec:no_backref_tc_sketch},
  $$\calL(\code{(}\concat\Theta_{i,j}\concat\code{)}) = \phi(x_{i+1, 1})\cdots\phi(x_{i+1,j}) = \phi(x_{i,3})\cdots\phi(x_{i,j+2})$$

  Therefore, the match happens if and only if $\sigma_k = \phi(x_{i, j+3})$. That is, $t_k$ is deleted if and only if there exists a $\sigma_k = \phi(x_{i, j+3})$, and no marker files are deleted if there is no such $\sigma_k$, meaning $j\ge |w_i| - 2$. Whether deletion happens or not, the code block from line~\ref{line:tc2:append_start} is evaluated. This is evaluated to true creating a directory $\Theta_{i,j+1}$ if and only if there exists a deleted marker file $t_k$, because the $k$-th expression's \find exits with non-zero status code only if the starting point $\Theta_{i,j}\concat\code{/}\concat t_k$ does not exist. The code block from line~\ref{line:tc2:append_end} is therefore evaluated when $j = |w_i| - 2$ for the first time for each $i$. By construction this turns the state into $\Theta_{i+1, 0}$ (see the proof in Section~\ref{sec:find_mkdir_tc} for details). Finally, because of the \code{,} operator, the line \code{-exec find} \sep \code{-type f -delete ;} is always evaluated, deleting all marker files, restoring the cleanness condition.

  The third command (if the second command halts) will output $\Phi(w_t)$ as explained in the proof in Section~\ref{sec:find_mkdir_tc}.
\end{proof}

\section{Toy Examples}
\label{sec:toy-examples}

\subsection{\find + \mkdir with back-references}

We use the following 2-tag system instance as an example\footnote{Taken from \protect\url{https://en.wikipedia.org/wiki/Tag_system\#Example:_A_simple_2-tag_illustration}}. The expected output is \code{Hcccccca}.

\begin{align*}
  \Sigma &= \{\code{a}, \code{b}, \code{c}, H = \code{H}\} \\
  P(\code{a}) &= \code{ccbaH}\\
  P(\code{b}) &= \code{cca}\\
  P(\code{c}) &= \code{cc} \\
  w_1 &= \code{baa} \\
\end{align*}

Following shell script directly translates the construction in Section~\ref{sec:find_mkdir_tc}. If it is run, it outputs the encoded halting word \code{/ad/ac/ac/ac/ac/ac/ac/aa}.

\begin{lstlisting}[language=bash,mathescape=false,
  keywordstyle=\color{Blue},
  commentstyle=\color{ForestGreen},
  stringstyle=\color{BrickRed}]
#!/bin/bash
TMP="$(mktemp -d)"
cd "$TMP" # For safety.
# Start of the main program

A=("aa" "ab" "ac" "ad")
sigma=(/"${A[0]}" /"${A[1]}" /"${A[2]}" /"${A[3]}")
eta="${sigma[3]}"
lambda="/[a-z][a-z]"
Lambda='\('"$lambda"'\)*'

pi=(
    "${sigma[2]}${sigma[2]}${sigma[1]}${sigma[0]}${sigma[3]}"
    "${sigma[2]}${sigma[2]}${sigma[0]}"
    "${sigma[2]}${sigma[2]}"
)
Phiw1="${sigma[1]}${sigma[0]}${sigma[0]}" # baa

alpha=(
  '.*_'"$lambda$lambda"'\('"$Lambda"'\)'"${sigma[0]}$Lambda"'/_\1'
  '.*_'"$lambda$lambda"'\('"$Lambda"'\)'"${sigma[1]}$Lambda"'/_\1'
  '.*_'"$lambda$lambda"'\('"$Lambda"'\)'"${sigma[2]}$Lambda"'/_\1'
  '.*_'"$lambda$lambda"'\('"$Lambda"'\)'"${sigma[3]}$Lambda"'/_\1'
)
beta=(
  '.*_'"${sigma[0]}$Lambda"'/_'"$Lambda"
  '.*_'"${sigma[1]}$Lambda"'/_'"$Lambda"
  '.*_'"${sigma[2]}$Lambda"'/_'"$Lambda"
)
gamma='.*_\(\|'"$lambda"'\|'"$eta$Lambda"'\)/_'

mkdir -p '_'"$Phiw1"'/_'
find _ -empty \( \
    -regex "$gamma" -quit -o \
    -regex "${alpha[0]}" -execdir mkdir {}"${sigma[0]}" \; -o \
    -regex "${alpha[1]}" -execdir mkdir {}"${sigma[1]}" \; -o \
    -regex "${alpha[2]}" -execdir mkdir {}"${sigma[2]}" \; -o \
    -regex "${alpha[3]}" -execdir mkdir {}"${sigma[3]}" \; -o \
    -regex "${beta[0]}" -execdir mkdir -p {}"${pi[0]}"/_ \; -o \
    -regex "${beta[1]}" -execdir mkdir -p {}"${pi[1]}"/_ \; -o \
    -regex "${beta[2]}" -execdir mkdir -p {}"${pi[2]}"/_ \; -o \
    -printf unreachable \
\)
find _ -depth ! -empty -name _ -execdir find _ ! -name _ -printf /%f \; -quit

# End of the main program
rm -rf "$TMP"
\end{lstlisting}

\subsection{GitHub repository}

Other toy examples are available at the author's GitHub repository
\url{https://github.com/ogiekako/pub_find_mkdir}.


\begin{thebibliography}{10}

\bibitem{bratus2013weird}
Julian Bangert, Sergey Bratus, Rebecca Shapiro, and Sean~W. Smith, \emph{The page-fault weird machine: Lessons in instruction-less computation}, 7th USENIX Workshop on Offensive Technologies (WOOT 13), USENIX Association, 2013.

\bibitem{blaess-turing-sed}
Christophe Blaess, \emph{turing.sed}, \url{https://sed.sourceforge.net/grabbag/scripts/turing.sed}, 2001, Link dead. Archived from the original on 2018-01-16: \url{https://web.archive.org/web/20180116201401/http://sed.sourceforge.net/grabbag/scripts/turing.sed}. Accessed: 2026-01-10.

\bibitem{sed-tc}
Christophe Blaess{\relax}, \emph{Implementation of a {Turing} {Machine} as {Sed} script}, \url{https://sed.sourceforge.net/grabbag/scripts/turing.txt}, 2003, Link dead. Archived from the original on 2018-02-20: \url{https://web.archive.org/web/20180220011912/http://sed.sourceforge.net/grabbag/scripts/turing.txt}. Accessed: 2026-01-10.

\bibitem{churchill2019magic}
Alex Churchill, Stella Biderman, and Austin Herrick, \emph{{Magic}: The {Gathering} is {Turing} {Complete}}, 10th International Conference on Fun with Algorithms ({FUN} 2021), {LIPIcs}, vol. 157, Schloss Dagstuhl -- Leibniz-Zentrum f{\"u}r Informatik, 2021, pp.~11:1--11:21.

\bibitem{demol2008turing}
Liesbeth De~Mol, \emph{Tag systems and {Collatz}-like functions}, Theoretical Computer Science \textbf{390} (2008), no.~1, 92--101.

\bibitem{dolan2013mov}
Stephen Dolan, \emph{mov is {Turing}-complete}, \url{https://stedolan.net/research/mov.pdf}, 2013, Link dead. Archived from the original on 2021-02-14: \url{https://web.archive.org/web/20210214020524/https://stedolan.net/research/mov.pdf}. Accessed: 2026-01-03.

\bibitem{gnu-findutils}
{Free Software Foundation}, \emph{Gnu findutils manual}, 2024, Version 4.10.0.

\bibitem{hackaday2024find}
Hackaday, \emph{Proof that find + mkdir are {Turing}-{Complete}}, \url{https://hackaday.com/2024/08/05/proof-that-find-mkdir-are-turing-complete/}, 2024, Accessed: 2026-01-10.

\bibitem{minsky1961tag}
Marvin~L. Minsky, \emph{Recursive unsolvability of {Post}'s problem of ``{Tag}'' and other topics in theory of {Turing} machines}, Annals of Mathematics \textbf{74} (1961), no.~3, 437--455.

\bibitem{minsky1967computation}
Marvin~L.{\relax} Minsky, \emph{Computation: Finite and infinite machines}, Prentice-Hall, Inc., 1967.

\bibitem{neary2009four}
Turlough Neary and Damien Woods, \emph{Four small universal {Turing} machines}, Fundamenta Informaticae \textbf{91} (2009), no.~1, 123--144.

\bibitem{ogiekako2024find}
Keigo Oka, \emph{find + mkdir is {Turing} complete}, \url{https://ogiekako.vercel.app/blog/find_mkdir_tc}, 2024, Accessed: 2026-01-10.

\bibitem{post1943formal}
Emil~L. Post, \emph{Formal reductions of the general combinatorial decision problem}, American Journal of Mathematics \textbf{65} (1943), no.~2, 197--215.

\bibitem{rogozhin1996small}
Yurii Rogozhin, \emph{Small universal {Turing} machines}, Theoretical Computer Science \textbf{168} (1996), no.~2, 215--240.

\bibitem{posix-find}
{The Open Group}, \emph{The open group base specifications issue 8, 2024 edition: find}, 2024, Accessed: 2026-01-10.

\end{thebibliography}
\end{document}